\def\colorschemesepia{sepia}
\def\colorschemedark{dark}
\def\colorschemelight{light}
\let\colorscheme\colorschemelight
\colorlet{textColor}{black}
\colorlet{bgColor}{white}
\definecolor{textColor}{HTML}{433423}
\definecolor{bgColor}{HTML}{fbf0da}
\definecolor{textColor}{HTML}{bdc1c6}
\definecolor{bgColor}{HTML}{202124}
\definecolor{textBlue}{HTML}{8ab4f8}
\definecolor{textRed}{HTML}{f9968b}
\definecolor{textGreen}{HTML}{81e681}
\definecolor{textPurple}{HTML}{c58af9}
\colorlet{textBlue}{blue!50!black}
\colorlet{textRed}{red!50!black}
\colorlet{textGreen}{green!50!black}
\definecolor{textPurple}{HTML}{681da8}
\let\eps\varepsilon
\newcommand*{\defeq}{:=}
\newcommand*{\Th}{^{\textrm{th}}}
\DeclareMathOperator*{\E}{\mathbb{E}}
\DeclareMathOperator*{\argmin}{argmin}
\DeclareMathOperator*{\argmax}{argmax}
\newcommand*{\wLoG}{without loss of generality}
\DeclareMathOperator*{\argsup}{argsup}
\DeclareMathOperator{\opt}{opt}
\DeclareMathOperator{\LPT}{LPT}
\newcommand*{\Null}{\mathtt{null}}
\newcommand*{\Ical}{\mathcal{I}}
\newcommand*{\xhat}{\widehat{x}}
\newcommand*{\zhat}{\widehat{z}}
\newtheorem{theorem}{Theorem}
\newtheorem{lemma}[theorem]{Lemma}
\g@addto@macro{\UrlBreaks}{%
\do\/%
\do\a\do\b\do\c\do\d\do\e\do\f\do\g\do\h\do\i\do\j\do\k\do\l\do\m%
\do\n\do\o\do\p\do\q\do\r\do\s\do\t\do\u\do\v\do\w\do\x\do\y\do\z%
\do\A\do\B\do\C\do\D\do\E\do\F\do\G\do\H\do\I\do\J\do\K\do\L\do\M%
\do\N\do\O\do\P\do\Q\do\R\do\S\do\T\do\U\do\V\do\W\do\X\do\Y\do\Z%
\do\0\do\1\do\2\do\3\do\4\do\5\do\6\do\7\do\8\do\9%
}
\newenvironment*{tightenum}{\begin{enumerate}\setlength{\itemsep}{0.0em}}{\end{enumerate}}%
\tikzset{
cVertex/.style={thick,minimum size=7mm,inner sep=0},
cGrayVertex/.style={cVertex,draw=textColor,fill={textColor!7!bgColor}},
cBlueVertex/.style={cVertex,draw={blue!20!textColor},fill={blue!10!bgColor},text=textBlue},
cVar/.style={cBlueVertex,shape=rectangle},
cConst/.style={cGrayVertex,shape=circle},
cOp/.style={cGrayVertex,shape=circle},
cOut/.style={double=bgColor},
cEdge/.style={->,>={Stealth},semithick},
}
\definecolor{codeBlue}{HTML}{268bd2}
\definecolor{codeCyan}{HTML}{2aa198}
\definecolor{codeGreen}{HTML}{859900}
\definecolor{codeBg}{HTML}{002b36}
\definecolor{codeColor}{HTML}{839496}
\definecolor{codeSecColor}{HTML}{586e75}
\definecolor{codeBg}{HTML}{fdf6e3}
\definecolor{codeColor}{HTML}{657b83}
\definecolor{codeSecColor}{HTML}{93a1a1}
\lstdefinestyle{myStyle}{
backgroundcolor=\color{codeBg!50!bgColor},
commentstyle=\color{codeSecColor},
keywordstyle=\color{codeGreen},
numberstyle=\color{codeCyen},
stringstyle=\color{codeCyan},
basicstyle={\small\ttfamily\color{codeColor!80!textColor}},
showspaces=false,
showstringspaces=false,
}
\title{Automating the Search for Small Hard Examples to Approximation Algorithms}
\author{Eklavya Sharma\\
Industrial \& Enterprise Systems Engineering\\
University of Illinois at Urbana-Champaign, USA\\
\texttt{eklavya2@illinois.edu}}
\date{\empty}
\begin{document}

\maketitle
\setlength{\parskip}{0.5em}

\begin{abstract}
Given an approximation algorithm $A$,
we want to find the input with the worst approximation ratio,
i.e., the input for which $A$'s output's objective value is the worst possible
compared to the optimal solution's objective value.
Such \emph{hard examples} shed light on the approximation algorithm's weaknesses,
and could help us design better approximation algorithms.
When the inputs are discrete (e.g., unweighted graphs),
one can find hard examples for small input sizes using brute-force enumeration.
However, it's not obvious how to do this when the input space is continuous,
as in makespan minimization or bin packing.

We develop a technique for finding small hard examples
for a large class of approximation algorithms.
Our algorithm works by constructing a decision tree representation of the approximation algorithm
and then running a linear program for each leaf node of the decision tree.
We implement our technique in Python, and demonstrate it on the
longest-processing-time (LPT) heuristic for makespan minimization.

\end{abstract}

\section{Introduction}
\label{sec:intro}

Optimization problems, and algorithms to solve them,
have been of immense importance in computer science and operations research.
Discrete optimization problems like shortest path in graphs, vertex cover,
bin packing, and job scheduling, and continuous optimization problems
like linear programming and max $s$-$t$ flow, have been intensely studied.

Often, optimization problems cannot be solved optimally
due to computational hardness (e.g., NP-hardness),
incomplete or noisy input (e.g., in online or robust optimization regimes),
or external constraints (e.g., restriction to structured solutions or strategyproof mechanisms).
Hence, approximation algorithms are widely used.

A central question in approximation algorithm design is finding the
\emph{approximation ratio} of an algorithm.
The approximation ratio is defined as the worst-case ratio of
the algorithm's output's objective value and the optimal solution's objective value.
Formally, let $A(I)$ be algorithm $A$'s output on input $I$,
let $\Ical$ be the set of all inputs,
and let $\opt(I)$ be the optimal solution to input $I$.
For minimization problems, let $c(I, z)$ be the cost of solution $z$ for input $I$.
Then the approximation ratio of $A$ is defined as
\[ \alpha(A) \defeq \sup_{I \in \Ical} \frac{c(I, A(I))}{c(I, \opt(I))}. \]
Hence, the approximation ratio is at least 1 for minimization problems.
For maximization problems, let $s(I, z)$ be the score of solution $z$ for input $I$.
Then $A$'s approximation ratio is defined as
\[ \alpha(A) \defeq \inf_{I \in \Ical} \frac{s(I, A(I))}{s(I, \opt(I))}. \]
Hence, the approximation ratio is at most 1 for maximization problems.

For some approximation algorithms, determining their exact approximation ratio is difficult and
requires deep insight into the problem and the algorithm
(though obtaining loose lower and upper bounds may be easy).
One way to build intuition is to run the algorithm on many small inputs
and see which inputs the algorithm performs poorly on.
Formally, for minimization problems, let $\Ical_n$ be the set of all inputs of size $n$.
Then the \emph{approximation ratio for size $n$} is defined as
\[ \alpha_n(A) \defeq \sup_{I \in \Ical_n} \frac{c(I, A(I))}{c(I, \opt(I))}. \]
For problems with discrete inputs (e.g., min-cardinality vertex cover),
$\Ical_n$ is a finite set, so for small enough values of $n$,
one can find $\alpha_n(A)$ and the corresponding maximizer in $\Ical_n$ by brute force.
These maximizers, called \emph{hard inputs}, shed light on the algorithm's weaknesses,
and help us obtain tight bounds on the approximation ratio of $A$,
and may even help us design better approximation algorithms.

This approach fails for problems with continuous inputs.
Let us take the makespan minimization problem as an example.
In this problem, we are given $n$ jobs and $m$ identical machines,
(assume $m$ is a fixed parameter, not part of the input),
and our goal is to assign each job to a machine such that
the maximum load among the machines is minimized.
The input to this problem is a vector $x \in \mathbb{R}^n$,
where $x_j$ is the size of job $j$. Since the set of inputs is continuous,
one cannot find $\alpha_n(A)$ by brute force for any algorithm $A$.

We show that for problems with continuous inputs and discrete outputs
(e.g., bin packing, makespan minimization),
for a large family of approximation algorithms,
one can obtain the approximation ratio for size $n$
(and the corresponding hard input) in finite time.
We do this using a mix of brute force and linear programming.
What makes our approach special is that it is fully automatic.
Specifically, we give a computer program that takes input $(n, A)$,
where $n \in \mathbb{N}$ and $A$ is an approximation algorithm
written in an appropriate programming language (e.g., Python).
The program outputs $\alpha_n(A)$ and the corresponding hard input.

Our program works for every algorithm $A$ that can be represented as
a decision tree for any fixed input size $n \in \mathbb{N}$.
Many well-known approximation algorithms can be represented this way.
Our program works by computing $A$'s decision tree
and running a linear program for each leaf of $A$.

Using our program, we could recover known hard examples
for the longest-processing-time (LPT) heuristic \cite{graham1969bounds}
for the makespan minimization problem with up to 4 machines.

We used a rudimentary form of this paper's techniques to find a hard example for
Garg and Taki's algorithm \cite{garg2021improved} for the fair division of goods.
We presented our hard example in Section 5 of \cite{akrami2023simplification},
though we later found out that this example was already known for the LPT heuristic
for the same problem \cite{deuermeyer1982scheduling,babaioff2021fair}.

In \cref{sec:prelims}, we formally define optimization problems and decision trees.
In \cref{sec:che}, we show how to find hard inputs for an approximation algorithm
using the algorithm's decision tree.
In \cref{sec:cdt}, we show how to construct an algorithm's decision tree.
In \cref{sec:implement-lpt}, we describe an implementation of our ideas for
the LPT heuristic for makespan minimization.
We end with concluding remarks in \cref{sec:conclusion}.

\section{Preliminaries}
\label{sec:prelims}

For any non-negative integer $n$, let $[n] \defeq \{1, 2, \ldots, n\}$.
Let $\mathbb{N} \defeq \{1, 2, 3, \ldots\}$.

\paragraph{Division by 0:} For any $a \in \mathbb{R} \setminus \{0\}$,
define $a/0 \defeq \infty$ if $a > 0$ and $a/0 \defeq -\infty$ if $a < 0$.

\subsection{Suprema and Infima}
\label{sec:prelims:sup-inf}

We define the notation and conventions for supremum and maximum.
The notation and conventions for infimum and minimum can be defined analogously.

Let $Y \subseteq \mathbb{R} \cup \{-\infty, \infty\}$.
If $Y = \emptyset$ or $Y = \{-\infty\}$, let $\sup(Y) \defeq -\infty$.
If $\infty \in Y$ or if $Y$ has no upper bound in $\mathbb{R}$, let $\sup(Y) \defeq \infty$.
Otherwise, $\sup(Y)$ is the least upper bound on $Y$.
If $\sup(Y) \in Y$, then $\max(Y) \defeq \sup(Y)$,
otherwise $\max(Y)$ is said to be undefined or non-existent.

Let $X$ be any set and $f: X \to \mathbb{R} \cup \{-\infty, \infty\}$ be any function.
Define $Y \defeq \{f(x): x \in X\}$.
Define $\sup_{x \in X} f(x) \defeq \sup(Y)$
and $\max_{x \in X} f(x) \defeq \max(Y)$. Define
\[ \argmax_{x \in X} f(x) \defeq \left\{x \in X: f(x) = \max(Y) \right\}. \]
If $\max(Y)$ is not defined,
we can use $\argsup$ as an alternative to $\argmax$.
For $X \neq \emptyset$, let $\argsup_{x \in X} f(x)$ be the set of all infinite sequences
$(x_1, x_2, \ldots)$ where $(f(x_1), f(x_2), \ldots)$ is non-decreasing, and
\begin{tightenum}
\item If $\sup(Y) = \infty$, then $\forall M \in \mathbb{R}$, $\exists i \in \mathbb{N}$
    such that $f(x_i) \ge M$.
\item If $\sup(Y) = -\infty$, then $f(x_i) = -\infty$ for all $i \in [n]$.
\item If $\sup(Y) \in \mathbb{R}$, then $\forall \eps > 0$, $\exists i \in \mathbb{N}$
    such that $\sup(Y) - \eps \le f(x_i) \le \sup(Y)$.
\end{tightenum}
It is easy to see that $\argsup_{x \in X} f(x)$ is non-empty (if $X \neq \emptyset$).

\subsection{Halfspaces and Polyhedra}
\label{sec:prelims:polyhedra}

For $a \in \mathbb{R}^n$ and $b \in \mathbb{R}$,
\begin{tightenum}
\item A set of the form $\{x \in \mathbb{R}^n: a^Tx \ge b\}$ is called a \emph{closed halfspace}.
\item A set of the form $\{x \in \mathbb{R}^n: a^Tx > b\}$ is called an \emph{open halfspace}.
\item A set of the form $\{x \in \mathbb{R}^n: a^Tx = b\}$ is called a \emph{hyperplane}.
\end{tightenum}
The intersection of closed halfspaces and hyperplanes is called a \emph{closed polyhedron}.
The intersection of closed and open halfspaces and hyperplanes is called a \emph{polyhedron}.

In the \emph{linear programming} (LP) problem,
we must maximize a linear objective under a (closed) polyhedral constraint.
This problem is of immense importance, and has been studied intensely.
\Cref{thm:lin-prog} describes the computational and structural guarantees for linear programming.

\begin{theorem}[Linear programming, \cite{bazaraa2010linear}]
\label{thm:lin-prog}
Given a (closed) polyhedron $P \subseteq \mathbb{R}^n$ and a vector $c \in \mathbb{R}^n$,
there is an efficient algorithm that outputs a pair $(x, d)$, where
$x \in P \cup \{\Null\}$ and $d \in \mathbb{R}^n$ such that
\begin{tightenum}
\item (infeasible) If $x = \Null$, then $P = \emptyset$.
\item (unbounded) If $x \neq \Null$ and $d \neq 0$, then $c^Td > 0$
    and $x + \alpha d \in P$ for all $\alpha \in \mathbb{R}_{\ge 0}$.
\item (optimal) If $x \neq \Null$ and $d = 0$, then $c^Tx \ge c^Ty$ for all $y \in P$.
\end{tightenum}
\end{theorem}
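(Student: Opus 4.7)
The plan is to invoke the classical theory of the simplex method. First, I would rewrite the closed polyhedron $P$ in the canonical form $\{x \in \mathbb{R}^n : Ax \le b\}$, representing each hyperplane $a^Tx = b$ as the pair of closed halfspaces $a^Tx \le b$ and $-a^Tx \le -b$. This yields a finite linear system describing $P$, to which I will apply a two-phase algorithm that produces one of the three required outcomes.

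Phase 1 solves an auxiliary LP that minimizes the sum of artificial variables added to make the system trivially feasible; this either produces a basic feasible solution of $P$ or, via a zero-valued objective unattained, certifies $P = \emptyset$, in which case we return $(\Null, 0)$. Phase 2 starts from the basic feasible solution $x$ of Phase 1 and repeatedly examines the reduced costs of $c^Tx$: if none is improving, then $x$ is optimal and we return $(x, 0)$; otherwise there is an edge direction $d$ with $c^Td > 0$ and $Ad \le 0$, so $A(x + \alpha d) \le b$ and hence $x + \alpha d \in P$ for every $\alpha \ge 0$. The ratio test now either pivots to a new vertex (whenever the step size is finite) or detects that every entry of $Ad$ is non-positive, in which case $d$ is an unbounded ray and we return $(x, d)$.

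The main obstacle is guaranteeing termination: naive pivoting can cycle at degenerate vertices. This is remedied by an anti-cycling rule such as Bland's rule or the lexicographic rule, both of which ensure that only finitely many pivots occur and thus that one of the three terminating branches is reached. A secondary subtlety is the existence of a basic feasible solution when $P$ is non-empty but contains a line (i.e., has nontrivial lineality space): this is handled by the standard substitution $x = x^+ - x^-$ with $x^+, x^- \ge 0$, which reduces the problem to a pointed polyhedron where vertices are guaranteed to exist, and from whose rays we can reconstruct a ray $d$ of the original $P$. Polynomial running time, if desired, can be obtained by replacing simplex with the ellipsoid or interior-point method, but for the qualitative guarantees of the theorem the two-phase simplex argument suffices.
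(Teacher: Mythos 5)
The paper does not prove \cref{thm:lin-prog}; it cites it as a classical result from the Bazaraa--Jarvis--Sherali textbook, so there is no in-paper argument to compare against. Your two-phase simplex sketch --- Phase 1 feasibility, Phase 2 optimality or unboundedness via the ratio test, an anti-cycling rule, and the $x = x^+ - x^-$ substitution for non-pointed polyhedra --- is the standard textbook proof of exactly this statement and is substantively correct. One minor exposition hiccup: you assert $Ad \le 0$ for the improving edge direction before running the ratio test, whereas $Ad \le 0$ is precisely what the ratio test \emph{detects} in the unbounded branch (an improving edge direction in general has some positive components of $Ad$, which is what makes the step length finite); you straighten this out in the following sentence, so the logic ultimately stands.
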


Although \cref{thm:lin-prog} only works for closed polyhedra, it can essentially
be made to work for non-closed polyhedra too. See \cref{sec:open-lp} for details.

\subsection{Optimization Problems and Algorithms}
\label{sec:optimization}

We describe minimization problems for a fixed input size $n \in \mathbb{N}$.
A minimization problem is given by a tuple $(X, Z, c)$.
Here $X \subseteq \mathbb{R}^n$ is the \emph{input space}, $Z$ is the \emph{output space},
and $c: X \times Z \to \mathbb{R}_{\ge 0} \cup \{\infty\}$ is the \emph{cost function},
i.e., $c(x, z)$ is the cost of output $z$ for input $x$.
Given $x \in X$, our goal is to output $z \in Z$ such that $c(x, z)$ is minimized.
We assume that $Z$ is finite, and $\forall x \in X$,
$\exists z \in Z$ such that $c(x, z) \in \mathbb{R}_{\ge 0}$.

An algorithm for an optimization problem is a function from $X$ to $Z$.
$A$'s approximation ratio $\alpha_n(A)$ and hard example set $H_n(A)$ are defined as
\begin{align*}
\alpha_n(A) &\defeq \sup_{x \in X} \frac{c(x, A(x))}{\displaystyle \inf_{z \in Z} c(x, z)}.
& H_n(A) &\defeq \argsup_{x \in X} \frac{c(x, A(x))}{\displaystyle \inf_{z \in Z} c(x, z)}.
\end{align*}

A minimization problem is called \emph{linear} if
$c(x, z)$ is piecewise-affine in $x$ for every $z \in Z$,
where each piece is a polyhedron.
Specifically, for all $z \in Z$, we have $X = \bigcup_{j=1}^{k_z} X^{(z)}_j$
such that $\forall j \in [k_z]$, $X_j$ is a polyhedron,
and for some $a \in \mathbb{R}^n$ and $b \in \mathbb{R} \cup \{\infty\}$,
we have $c(x, z) = a^Tx + b$ for all $x \in X^{(z)}_j$.

A maximization problem is given by a tuple $(X, Z, s)$,
where $X \subseteq \mathbb{R}^n$ is the \emph{input space}, $Z$ is the \emph{output space},
and $s: X \times Z \to \mathbb{R}_{\ge 0}$ is the \emph{score function},
i.e., $s(x, z)$ is the score of output $z$ for input $x$. $Z$ is finite.
Given $x \in X$, our goal is to output $z \in Z$ such that $s(x, z)$ is maximized.
We can define approximation ratios for algorithms for maximization problems analogously.

\subsection{Makespan Minimization}
\label{sec:makespan}

We will use the makespan minimization problem for illustrative purposes throughout this paper.
In this problem, we are given $n$ jobs and $m$ identical machines,
($m$ is a fixed parameter, not part of the input),
and our goal is to assign each job to a machine such that
the maximum load among the machines is minimized.

The input is a vector $x \in \mathbb{R}^n$, where $x_j$ is the size of job $j$.
Assume \wLoG{} that jobs are sorted in decreasing order of size, so
$X \defeq \{x \in \mathbb{R}^n: x_1 \ge \ldots \ge x_n \ge 0\}$.
We need to output an \emph{assignment}, i.e., a vector $(z_1, \ldots, z_n)$
where $z_j$ is the machine that job $j$ is assigned to. Hence, $Z \defeq [m]^n$.

The \emph{load} on machine $i$ is given by
\[ \ell_i(x, z) \defeq \sum_{j=1}^n x_j\left\{\begin{array}{ll}
1 & \quad\textrm{if } z_j = i
\\ 0 & \quad\textrm{if } z_j \neq i
\end{array}\right\}. \]
The cost of assignment $z$ is $c(x, z) \defeq \max_{i=1}^m \ell_i(x, z)$
(i.e., the maximum load among the machines).

The LPT (longest processing time) algorithm \cite{graham1969bounds}
for makespan minimization first sorts the jobs in non-increasing order of running times,
and then assigns the jobs sequentially and greedily,
i.e., in the $j\Th$ step, it assigns job $j$ to the machine with the least load so far.

\begin{algorithm}[htb]
\caption{$\LPT(x, m)$: Longest Processing Time algorithm.
\\ $m \in \mathbb{N}$ and $x \in \mathbb{R}^n$ such that $x_1 \ge \ldots \ge x_n \ge 0$.}
\label{algo:lpt}
\begin{algorithmic}[1]
\State Initialize \texttt{loads} to be a sequence of $m$ zeros.
\State Initialize $z$ to be an empty sequence.
\For{$j \in [n]$}
    \State Let $i \in \argmin_{i=1}^m \texttt{loads}_i$.
    \State Append $i$ to $z$.
    \State $\texttt{loads}_i \texttt{ += } x_j$.
\EndFor
\State \Return $z$
\end{algorithmic}
\end{algorithm}

$c(x, z)$ is piecewise-linear. For any $z \in Z$, define
$X^{(z)}_i \defeq \{x \in X: \ell_i(x, z) \ge \ell_j(x, z) \forall j \in [m] \setminus \{i\}\}$.
Then $c(x, z) = \ell_i(x, z)$ for all $x \in X^{(z)}_i$.

\subsection{Decision Tree}

A decision tree $D$ for a minimization problem $(X, Z, c)$ is a binary tree where
each internal node has exactly two children: a \emph{false child} and a \emph{true child}.
Each leaf node $v$ is labeled with an output $z_v \in Z$,
and each internal node $v$ is labeled with a function $f_v: \mathbb{R}^n \to \{0, 1\}$.
$D$ is called \emph{linear} if each $f_v$ is a (closed or open) halfspace.

For any $x \in \mathbb{R}^n$ and decision tree $D$, we define $D(x)$ to be
a leaf $\ell$ obtained by tracing a path from the root to $\ell$,
where for each internal node $v$, we move to the true child if $f_v(x) = 1$
and we move to the false child if $f_v(x) = 0$.
Hence, we can interpret the decision tree $D$ as a function from $\mathbb{R}^n$ to $D$'s leaves.

We say that algorithm $A$ can be represented as a decision tree $D$
if for every input $x \in X$, we get $A(x) = z_{D(x)}$.

For $m=2$ and $n=5$, the decision tree of the LPT algorithm is given in \cref{fig:lpt-dtree}.

\begin{figure}[htb]
\centering
\begin{tikzpicture}[
level 1/.style={sibling distance=6cm,level distance=2cm},
level 2/.style={sibling distance=3cm,level distance=2cm},
vertex/.style={draw, minimum size=0.7cm, rectangle, inner sep=5pt},
leaf/.style={vertex},
]
\node[vertex] (v) {$x_1 \le x_2 + x_3$}
child {node[vertex] (v0) {$x_1 \le x_2 + x_3 + x_4$}
    child {node[leaf] (v00) {$(1, 2, 2, 2, 2)$}}
    child {node[leaf] (v01) {$(1, 2, 2, 2, 1)$}}
}
child {node[vertex] (v1) {$x_1 + x_4 \le x_2 + x_3$}
    child {node[leaf] (v10) {$(1, 2, 2, 1, 2)$}}
    child {node[leaf] (v11) {$(1, 2, 2, 1, 1)$}}
};
\path (v) -- node[right] {true} (v1);
\path (v) -- node[left] {false} (v0);
\path (v1) -- node[right] {true} (v11);
\path (v1) -- node[left] {false} (v10);
\path (v0) -- node[right] {true} (v01);
\path (v0) -- node[left] {false} (v00);
\end{tikzpicture}

\caption{Decision tree for the LPT algorithm when $m=2$ and $n=5$ (with appropriate tie-breaking rules).}
\label{fig:lpt-dtree}
\end{figure}
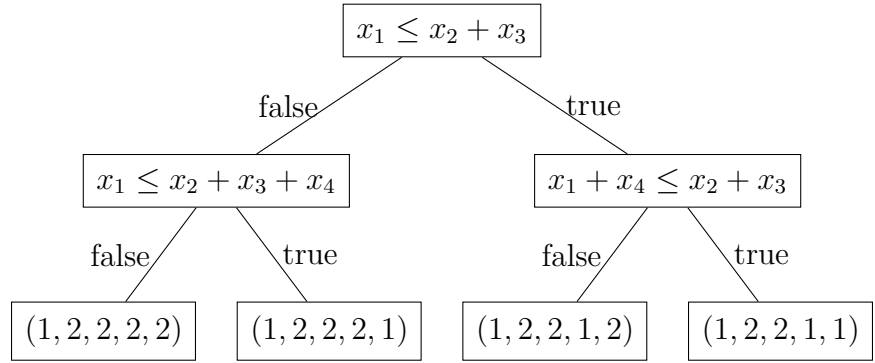

\section{Computing Hard Examples}
\label{sec:che}

Consider a linear minimization problem $(X, Z, c)$,
where $X \subseteq \mathbb{R}^n$ and $Z$ is finite.
Suppose we are given a linear decision tree $D$ for approximation algorithm $A$.
We show how to use $D$ to find the approximation ratio $\alpha_n(A)$
and hard examples $H_n(A)$.

Let $L$ be the set of all leaves of $D$.
For any leaf $\ell \in L$, let $X_{\ell} \defeq \{x \in X: D(x) = \ell\}$.
Then $X_{\ell}$ is the intersection of $X$ with halfspaces from
all the ancestors of $\ell$. Hence, $X_{\ell}$ is a polyhedron.
Furthermore, $X = \bigcup_{\ell \in L} X_{\ell}$.
\begin{align*}
\alpha_n(A) &\defeq \sup_{x \in X}\; \frac{c(x, A(x))}{\displaystyle \inf_{z^* \in Z} c(x, z^*)}
    = \max_{z^* \in Z}\; \sup_{x \in X}\; \frac{c(x, A(x))}{c(x, z^*)}
\\ &= \max_{z^* \in Z}\; \max_{\ell \in L}\; \sup_{x \in X_{\ell}}\;
    \frac{c(x, z_{\ell})}{c(x, z^*)}.
\\ &= \max_{z^* \in Z}\; \max_{\ell \in L}\;
    \max_{j_{\ell} \in [k_{z_{\ell}}]}\; \max_{j^* \in [k_{z^*}]}\;
    \sup_{x \in X_{\ell} \cap X^{(z_{\ell})}_{j_{\ell}} \cap X^{(z^*)}_{j^*}}\;
    \frac{c(x, z_{\ell})}{c(x, z^*)}.
\end{align*}
For any $(z^*, \ell, j_{\ell}, j^*)$, let
$Q \defeq X_{\ell} \cap X^{(z_{\ell})}_{j_{\ell}} \cap X^{(z^*)}_{j^*}$.
We can find $\sup_{x \in Q} \frac{c(x, z_{\ell})}{c(x, z^*)}$
using linear programming and binary search. Specifically,
\[ \sup_{x \in Q} \frac{c(x, z_{\ell})}{c(x, z^*)} > \alpha
\iff \{x \in Q: c(x, z_{\ell}) > \alpha c(x, z^*)\} \neq \emptyset. \]
Hence, checking if $\alpha$ upper-bounds this supremum is equivalent to
checking if a polyhedron is non-empty, which can be solved in polynomial time.

There are two special classes of problems where we don't need to use binary search
to compute $\sup_{x \in Q} \frac{c(x, z_{\ell})}{c(x, z^*)}$;
a single run of a linear program solver suffices:
\begin{enumerate}
\item \textbf{$c(x, z)$ is independent of $x$ for all $z \in Z$}:
    \\ Then the objective $\frac{c(x, z_{\ell})}{c(x, z^*)}$ becomes constant.
    Bin packing is one such problem because the cost is the number of bins,
    which depends on how the items are partitioned across bins, not on the size of the items.
\item \textbf{The input is scale-invariant}:
    \\ Formally, if $c(x, z^*)$ and $c(x, z_{\ell})$ are linear (not just affine) in $x$
    (i.e., $\exists a^*, a_{\ell} \in (\mathbb{R} \cup \{\infty\})^n$ such that
        $c(x, z^*) = (a^*)^Tx$ and $c(x, z_{\ell}) = a_{\ell}^Tx$ for all $x \in Q$),
    and $Q$ is a cone%
\footnote{A set $S \subseteq \mathbb{R}^n$ is a cone iff $\forall x \in S$ and
    $\forall \beta \in \mathbb{R}_{\ge 0}$, we have $\beta x \in S$.}, then
    \[ \sup_{x \in Q} \frac{c(x, z_{\ell})}{c(x, z^*)}
    = \sup_{x \in Q:\, c(x, z^*) \le 1} c(x, z_{\ell}). \]
    Makespan minimization is one such problem.
\end{enumerate}

\section{Computing Decision Trees}
\label{sec:cdt}

In this section, we describe how to obtain the decision tree for
an algorithm whose input space is $\mathbb{R}^n$.
The algorithm is given in an appropriate programming language (e.g., Python).
To warm up, let us first discuss circuits and how to recover a circuit
from its code using a technique we call \emph{gray-box access}.

\subsection{Recovering Circuits}
\label{sec:cdt:circuits}

A circuit is a recursively-defined structure.
A circuit is either a variable, or a constant,
or an operator with a sub-circuit for each of its operands.
A circuit can be represented as a directed acyclic graph (DAG).
It has a single source vertex, which represents the output of the circuit.
Variables and constants are sink vertices of the circuit.
See \cref{fig:circuits} for examples.

\begin{figure}[htb]
\centering
\begin{subfigure}{0.25\textwidth}
\centering
\begin{tikzpicture}
\node[cVar] (x) at (0,0) {$x$};
\node[cVar] (y) at (2,0) {$y$};
\node[cOp] (mult1) at (0,2) {$\times$};
\node[cOp] (mult2) at (2,2) {$\times$};
\node[cOp,cOut] (sub) at (1,3.5) {$-$};
\draw[cEdge] (sub) -- (mult1);
\draw[cEdge] (sub) -- (mult2);
\draw[cEdge] (mult1) to[bend left=30] (x);
\draw[cEdge] (mult1) to[bend right=30] (x);
\draw[cEdge] (mult2) to[bend left=30] (y);
\draw[cEdge] (mult2) to[bend right=30] (y);
\end{tikzpicture}

\caption{$x^2-y^2$}
\label{fig:circuits:quad1}
\end{subfigure}
\begin{subfigure}{0.25\textwidth}
\centering
\begin{tikzpicture}
\node[cVar] (x) at (0,0) {$x$};
\node[cVar] (y) at (2,0) {$y$};
\node[cOp] (add) at (0,2) {$+$};
\node[cOp] (sub) at (2,2) {$-$};
\node[cOp,cOut] (mult) at (1,3.5) {$\times$};
\draw[cEdge] (add) -- (x);
\draw[cEdge] (add) -- (y);
\draw[cEdge] (sub) -- (x);
\draw[cEdge] (sub) -- (y);
\draw[cEdge] (mult) -- (add);
\draw[cEdge] (mult) -- (sub);
\end{tikzpicture}

\caption{$(x+y)(x-y)$}
\label{fig:circuits:quad2}
\end{subfigure}
\begin{subfigure}{0.45\textwidth}
\centering
\begin{tikzpicture}
\node[cConst] (2) at (0,0) {$2$};
\node[cVar] (x) at (1,0) {$x$};
\node[cOp] (2x) at (0.5,1.3) {$\times$};
\draw[cEdge] (2x) -- (2);
\draw[cEdge] (2x) -- (x);
\node[cConst] (3) at (2,0) {$3$};
\node[cVar] (y) at (3,0) {$y$};
\node[cOp] (3y) at (2.5,1.3) {$\times$};
\draw[cEdge] (3y) -- (3);
\draw[cEdge] (3y) -- (y);
\node[cOp] (xy) at (1.5,2.4) {$+$};
\draw[cEdge] (xy) -- (2x);
\draw[cEdge] (xy) -- (3y);
\node[cVar] (z) at (3.5,2.1) {$z$};
\node[cOp] (lin) at (2.5,3.2) {$+$};
\draw[cEdge] (lin) -- (xy);
\draw[cEdge] (lin) -- (z);
\node[cConst] (10) at (4.5,2.9) {$10$};
\node[cOp,cOut] (out) at (3.5,4) {$\ge$};
\draw[cEdge] (out) -- (lin);
\draw[cEdge] (out) -- (10);
\end{tikzpicture}

\caption{$2x+3y+z \ge 10$}
\label{fig:circuits:lin}
\end{subfigure}
\caption{Examples of circuits represented as DAGs.
Variable nodes are rectangles. Output nodes have a double border.}
\label{fig:circuits}
\end{figure}
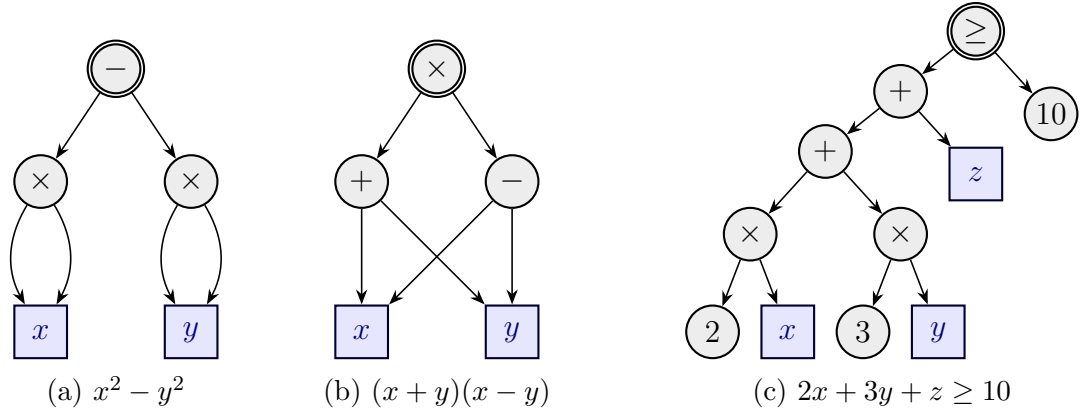

The functions \texttt{f1}, \texttt{f2}, and \texttt{f3} below in Python correspond to
circuits in \cref{fig:circuits:quad1,fig:circuits:quad2,fig:circuits:lin}, respectively.

\noindent\begin{minipage}[t]{0.45\textwidth}
\begin{lstlisting}[language=Python]
def f1(x, y):
    return x * x - y * y
\end{lstlisting}
\end{minipage}
\hfill
\begin{minipage}[t]{0.45\textwidth}
\begin{lstlisting}[language=Python]
def f2(x, y):
    s = x + y
    d = x - y
    return s * d
\end{lstlisting}
\end{minipage}
\begin{lstlisting}[language=Python]
def dot(x, y):
    # computes the dot product of two vectors
    return sum([xi * yi for (xi, yi) in zip(x, y)])

def f3(x, y, z):
    return dot([2, 3], [x, y]) + z >= 10
\end{lstlisting}

As evident from the above examples, especially \texttt{f3}, computing the circuit
for a function given as source code can be quite difficult.
This is because the function can execute arbitrary code,
and can even call other functions, including itself
and those in the programming language's standard library.
This task is about as difficult as writing a compiler/interpreter for the programming language.

Hence, we use a technique called \emph{gray-box access}, where we leverage
the programming language's compiler/interpreter to get the circuit for a function.
This technique only works with programming languages with certain special features.

The first key idea is to exploit duck typing or generics, i.e.,
we need not specify in advance the precise input types for functions.
E.g., for \texttt{f1}, even though we initially intended $x$ and $y$ to be real numbers,
\texttt{f1} also works with other input types, like complex numbers or matrices.
The second key idea is to implement circuits as a custom type
(called \emph{classes} in many programming languages).
Given circuits $C_1$ and $C_2$ and any binary operation $\circ$,
applying $\circ$ to $C_1$ and $C_2$ should return the circuit $C_1 \circ C_2$.
We combine these ideas, and pass circuits as inputs to our function,
and the output will be the circuit we want.

Here is a small self-contained runnable example in Python illustrating gray-box access.
This example outputs \texttt{((x + y) * z)}.

\lstinputlisting[language=python]{code/circuit.py}

A more comprehensive implementation of \texttt{Expr}, \texttt{Var}, and \texttt{BinExpr} can be found in
\href{https://github.com/sharmaeklavya2/code2dtree/blob/main/code2dtree/expr.py}{%
code2dtree/expr.py} in code2dtree \cite{code2dtree}.

\subsection{Recovering Decision Trees}
\label{sec:cdt:cdt}

The technique in \cref{sec:cdt:circuits} fails for functions with conditional
execution constructs, like \texttt{if} and \texttt{while} statements.
Indeed, such functions cannot be naturally represented as circuits,
as shown by the following function \texttt{signName}.

\begin{lstlisting}[language=Python]
def signName(x):
    if x > 0:
        return 'positive'
    elif x < 0:
        return 'negative'
    else:
        return 'zero'
\end{lstlisting}

However, we can represent them as decision trees, where each node is labeled with a circuit,
and internal nodes' circuits have boolean outputs.

We use the same high-level idea as before: run the function with \texttt{Var} objects as inputs.
However, the conditions in \texttt{if} and \texttt{while} statements
evaluate to \texttt{Expr} objects, not booleans.
Python tries to convert them to booleans, by calling the \verb|Expr.__bool__| method on them.
By running the function multiple times and strategically returning \texttt{True} or \texttt{False}
in \verb|Expr.__bool__|, we can systematically explore all branches of the function
and construct the decision tree.

We have implemented this idea in Python and made it available as a library called
\texttt{code2dtree} \cite{code2dtree}.
The following example shows how to use \texttt{code2dtree} to get
the decision tree for the sorting algorithm in Python's standard library
(the \href{https://docs.python.org/3/library/functions.html\#sorted}{sorted} function)
when given an input of length 3.

\noindent\textbf{Program:}
\lstinputlisting[language=python]{code/stdSort.py}

\noindent\textbf{Output:}
\lstinputlisting{code/sortOut.txt}

\section{Implementation for LPT}
\label{sec:implement-lpt}

We implemented our method in Python for the LPT algorithm for makespan minimization.
The script at
\url{https://gist.github.com/sharmaeklavya2/4e4c52565b2d7aeceb074e870e6ad4aa}
takes $n$ (number of jobs) and $m$ (number of machines) as input
and outputs the best approximation factor and the corresponding hard example.

The script uses a few tricks to reduce its running time:
\begin{enumerate}
\item One can show that in any hard example, the machine having the highest load
    contains the last job. (Otherwise we can remove the jobs appearing after
    the max-loaded machine's last job. This doesn't affect the makespan of LPT,
    but may reduce the optimal makespan.)
    Restricting the search to such hard examples makes the cost function for LPT linear
    (instead of piecewise-linear). This shrinks the search space.

    To enforce this structure on hard examples,
    we can add additional constraints to each leaf node in the decision tree.
    Specifically, if the last job was assigned to machine $i$ in solution $z$,
    we add the constraints $\ell_i(x, z) \ge \ell_j(x, z)$ for all $j \in [n] \setminus \{i\}$.
    In our script, we do this indirectly: in LPT, we return not only an assignment,
    but also the index of the most loaded machine.
    Then we simply prune out leaves where the last job was not assigned to the most loaded machine.
    (C.f. function \texttt{check1} in the script.)

\item In \cref{sec:che}, we said that for problems where the input is scale-invariant,
    we maximize $c(x, z_{\ell})$ subject to the constraint $c(x, z^*) \le 1$.
    Since $c(x, z^*) \defeq \max_{i=1}^n \ell_i(x, z^*)$, we can replace
    $c(x, z^*) \le 1$ by $n$ constraints: $\ell_i(x, z^*) \le 1$ for all $i \in [n]$.
    Hence, we don't need to partition the input space into $n$ pieces
    such that $c(x, z^*)$ is linear in each piece.

\item For each leaf $\ell$, define
    \[ \beta_{\ell} \defeq \max_{z^* \in Z} \sup_{x \in X_{\ell}: c(x, z^*) \le 1} c(x, z_{\ell}). \]
    By relaxing the constraint $c(x, z^*) \le 1$ to the constraints
    $\|x\|_1 \le m$ and $x_1 \le 1$, we can upper-bound $\beta_{\ell}$ by
    \[ \gamma_{\ell} \defeq \sup_{x \in X_{\ell}: \|x\|_1 \le m, x_1 \le 1} c(x, z_{\ell}). \]
    If $\gamma_{\ell} = -\infty$ (i.e., the LP defining it is infeasible),
    or if $\gamma_{\ell} \le \beta_{\ell'}$ for some other leaf $\ell'$,
    then we remove $\ell$ from consideration. (C.f. function \texttt{check2} in the script.)
\end{enumerate}

Despite these tricks, the script runs quite slowly.
The script could find the optimal solution for $m=4$ machines and $n=9$ jobs (seed 0)
in around 4 minutes on an Intel i5 macbook.
The script runs on a single core, although our approach is highly parallelizable.

The approximation ratio of LPT is $(4m-1)/3m$, and the corresponding
hard example uses $n = 2m+1$ jobs \cite{graham1969bounds}.
Our program successfully recovers these examples for $m \le 4$.

\section{Conclusion}
\label{sec:conclusion}

For optimization problems whose input space is a subset of $\mathbb{R}^n$ (where $n$ is constant)
and whose output space is discrete, for any approximation algorithm,
we devise a technique to find its approximation ratio and the corresponding hard examples.
This technique assumes that the problem and the approximation algorithm are \emph{linear}.

We implement our decision tree recovery technique, and show how to use it to compute the
approximation ratio and hard examples for the LPT heuristic for makekspan minimization.

The biggest drawback of our technique is its slow speed in practice,
since the size of the output space and the decision tree grow exponentially in the input size.
But our technique holds promise even with these drawbacks.
For some problems, hard examples for even very small inputs can be insightful.
Moreover, if one can guess the optimal solution $z^*$ and/or the algorithm's output $z_{\ell}$
for the hard example, the hard example can be recovered more easily.

Perhaps our approach can be significantly enhanced if we can somehow guide
the search for the right $z^*$ and $\ell$, instead of a brute-force computation.
This can be done using, e.g., the branch-and-bound framework, or reinforcement learning.

Our decision tree recovery tool can be of independent interest,
e.g., to analyze the conditions under which an algorithm gives a certain output.

\paragraph{Acknowledgements.}
Eklavya Sharma was partially supported by NSF Grant CCF-2334461.

\appendix
\section{Optimizing over Polyhedra}
\label{sec:open-lp}

In this section, we show how to extend linear programming for non-closed polyhedra.
We do this by reducing it to the closed case.

Let $P \defeq \{x \in \mathbb{R}^n: (a_i^Tx = b_i \,\forall i \in E) \text{ and }
    (a_i^Tx \ge b_i \,\forall i \in I_1) \text{ and }
    (a_i^Tx > b_i \,\forall i \in I_2) \}$,
where $E$, $I_1$, and $I_2$ are indices over constraints. Let $I \defeq I_1 \cup I_2$.
For all $i \in E \cup I$, we have $b_i \in \mathbb{R}$ and $a_i \in \mathbb{R}^n$.
$E$ and $I_1$ may be empty.

$P$'s \emph{closure} is defined as $Q \defeq \{x \in \mathbb{R}^n:
    (a_i^Tx = b_i \,\forall i \in E) \text{ and }
    (a_i^Tx \ge b_i \,\forall i \in I) \}$.
$Q$'s \emph{direction cone} is defined as $D \defeq \{x \in \mathbb{R}^n:
    (a_i^Tx = 0 \,\forall i \in E) \text{ and }
    (a_i^Tx \ge 0 \,\forall i \in I) \}$.

Linear programming over $P$ is closely related to linear programming over $Q$.
Hence, we begin by stating a stronger form of \cref{thm:lin-prog} for $Q$.

\begin{lemma}[\cite{bazaraa2010linear}]
\label{thm:lin-prog-strong}
Given the polyhedron $Q$ and a vector $c \in \mathbb{R}^n$ as inputs,
there is an efficient algorithm that outputs a pair $(x^*, d)$, where
$x^* \in Q \cup \{\Null\}$ and $d \in D$ such that
\begin{tightenum}
\item (infeasible) If $x^* = \Null$, then $Q = \emptyset$.
\item (unbounded) If $x^* \neq \Null$ and $d \neq 0$, then $c^Td > 0$.
\item (optimal) If $x^* \neq \Null$ and $d = 0$, then $c^Tx^* \ge c^Ty$ for all $y \in Q$.
\end{tightenum}
\end{lemma}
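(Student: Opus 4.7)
The plan is to invoke \cref{thm:lin-prog} directly on $Q$ (which is a closed polyhedron by construction) and the objective vector $c$, and then show that the direction $d$ returned already lies in $D$. In other words, the "stronger" form is not really stronger: it just makes explicit a property of the direction vector that standard LP solvers (e.g., simplex) produce anyway.

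First I would apply \cref{thm:lin-prog} to obtain $(x^*, d)$ with $x^* \in Q \cup \{\Null\}$ and $d \in \mathbb{R}^n$. The infeasible case ($x^* = \Null$) carries over verbatim, since \cref{thm:lin-prog} already guarantees $Q = \emptyset$. The optimal case ($x^* \neq \Null$, $d = 0$) also carries over verbatim, and the requirement $d \in D$ is trivial because $0 \in D$ (substituting $x = 0$ satisfies all defining constraints of $D$).

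The remaining work is the unbounded case, where I need to verify that $d \in D$. From \cref{thm:lin-prog}, I have $x^* + \alpha d \in Q$ for every $\alpha \in \mathbb{R}_{\ge 0}$. For each equality index $i \in E$, substituting $x = x^* + \alpha d$ into $a_i^T x = b_i$ yields $a_i^T x^* + \alpha\, a_i^T d = b_i$ for all $\alpha \ge 0$, which forces $a_i^T d = 0$. For each inequality index $i \in I$, substituting into $a_i^T x \ge b_i$ yields $a_i^T x^* + \alpha\, a_i^T d \ge b_i$ for all $\alpha \ge 0$; since $\alpha$ can be made arbitrarily large, this forces $a_i^T d \ge 0$. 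Combining these verifies $d \in D$.

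Given that \cref{thm:lin-prog} does all the heavy lifting, there is no real obstacle here; the only thing worth flagging is the conceptual point that the "direction cone" $D$ as defined in the appendix is exactly the recession cone of $Q$, so the condition $x^* + \alpha d \in Q$ for all $\alpha \ge 0$ (which \cref{thm:lin-prog} guarantees when $x^* \in Q$) is equivalent to $d \in D$. Once that equivalence is observed, the lemma reduces to a rewording of \cref{thm:lin-prog}.
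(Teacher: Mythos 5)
The paper offers no proof of this lemma at all: it simply cites Bazaraa, Jarvis, and Sherali, the same textbook cited for \cref{thm:lin-prog}. Your proposal goes further by actually \emph{deriving} the lemma from \cref{thm:lin-prog}, and the derivation is correct. The key observation—that $D$ is the recession cone of $Q$, so the ray condition $x^* + \alpha d \in Q$ for all $\alpha \ge 0$ with $x^* \in Q$ is equivalent to $d \in D$—is exactly right, and your case analysis on the equalities ($a_i^T d = 0$ for $i \in E$, forced by arbitrary $\alpha$) versus the inequalities ($a_i^T d \ge 0$ for $i \in I$, forced by unbounded $\alpha$) is sound. The paper in fact states the converse direction of this equivalence just after the lemma (``one can easily verify that $x^* + \alpha d \in Q$\,\ldots{}''), so your proof and the paper's remark together show the two formulations are logically interchangeable, not that one is strictly stronger. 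This is a useful observation that the paper leaves implicit.

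One small loose end: in the infeasible case, \cref{thm:lin-prog} places no constraint on $d$, so it does not literally ``carry over verbatim''—the returned $d$ may lie outside $D$. You need to explicitly overwrite it (e.g., set $d \defeq 0 \in D$) before returning. Trivial to fix, but worth stating since the lemma asserts $d \in D$ unconditionally.
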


Note that when $x^* \neq \Null$ and $d \neq 0$ in \cref{thm:lin-prog-strong},
then one can easily verify that $x^* + \alpha d \in Q$ for all
$\alpha \in \mathbb{R}_{\ge 0}$ using the definitions of $Q$ and $D$.

$P \subseteq Q$. Hence, if $Q = \emptyset$
(which we can check efficiently by \cref{thm:lin-prog-strong}),
then $P = \emptyset$, and we are done.
Now assume $Q \neq \emptyset$.

\subsection{Linear Feasibility}

Let us first start with the linear feasibility problem.
Here we need to check if $P \neq \emptyset$, and if so, output a point $x \in P$.

Define $R \defeq \{(z, x) \in \mathbb{R}_{\ge 0} \times \mathbb{R}^n:
    (a_i^Tx = b_i \,\forall i \in E) \text{ and }
    (a_i^Tx \ge b_i \,\forall i \in I_1) \text{ and }
    (a_i^Tx \ge b_i + z \,\forall i \in I_2)\}$.
For any constant $\delta \in \mathbb{R}_{\ge 0}$,
define $T(\delta) \defeq \{x: (\delta, x) \in R\}$.
Then $T(\delta)$ is a closed polyhedron for every $\delta \in \mathbb{R}_{\ge 0}$.
Note that $Q = T(0)$.

We can solve linear feasibility for $T(1)$ efficiently.
If $x \in T(1)$, then $x \in P$, since $T(1) \subseteq P$.
Now assume $T(1) = \emptyset$.

Consider the following (closed) linear programming problem:
$\displaystyle \max_{(z, x) \in R: z \le 1} z$.
\\ This problem has a feasible solution (of the form $(0, x)$) since $Q \neq \emptyset$.
Let $(\zhat, \xhat)$ be an optimal solution.
Then $P = \emptyset$ iff $\zhat = 0$.
Moreover, if $P \neq \emptyset$, then $\xhat \in P$.

\subsection{Linear Optimization}

Assume $P \neq \emptyset$. Let $\xhat$ be an arbitrary point in $P$.
Let $(x^*, d)$ be the solution to $\max_{x \in Q} c^Tx$,
as given by \cref{thm:lin-prog-strong}. Then $x^* \neq \Null$.
Using $\xhat$, $x^*$, and $d$, one can easily solve the linear programming problem for $P$.

\begin{lemma}
\label{thm:lin-prog-open}
If $d \neq 0$, then $\xhat + \alpha d \in P$ for all $\alpha \in \mathbb{R}_{\ge 0}$.
If $d = 0$, then for any $\eps \in \mathbb{R}_{> 0}$,
define $x_{\eps} \defeq (1-\eps)x^* + \eps\xhat$. Then $x_{\eps} \in P$
and $c^Tx_{\eps} \ge c^Tx^* - \eps(c^Tx^* - c^T\xhat)$.
\end{lemma}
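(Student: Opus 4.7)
My plan is to verify both parts of the lemma by a direct case analysis on the three types of constraints defining $P$: equalities (indexed by $E$), weak inequalities (indexed by $I_1$), and strict inequalities (indexed by $I_2$). The facts I will repeatedly invoke are: (i) $\xhat \in P$, so $a_i^T\xhat = b_i$ for $i \in E$, $a_i^T\xhat \ge b_i$ for $i \in I_1$, and $a_i^T\xhat > b_i$ for $i \in I_2$; (ii) $x^* \in Q$, so $a_i^Tx^* = b_i$ for $i \in E$ and $a_i^Tx^* \ge b_i$ for $i \in I_1 \cup I_2$; (iii) $d \in D$, so $a_i^Td = 0$ for $i \in E$ and $a_i^Td \ge 0$ for $i \in I_1 \cup I_2$.

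For the first claim (the $d \neq 0$ case), I would fix $\alpha \ge 0$ and check each constraint for $\xhat + \alpha d$. For $i \in E$, $a_i^T(\xhat+\alpha d) = b_i + 0 = b_i$. For $i \in I_1$, $a_i^T(\xhat+\alpha d) \ge b_i + \alpha \cdot 0 = b_i$. For $i \in I_2$, $a_i^T(\xhat+\alpha d) > b_i$, since we add the nonnegative quantity $\alpha a_i^Td$ to something strictly greater than $b_i$. This shows $\xhat + \alpha d \in P$.

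For the second claim (the $d = 0$ case), the cost inequality is actually an equality and follows by direct computation: $c^Tx_\eps = (1-\eps)c^Tx^* + \eps c^T\xhat = c^Tx^* - \eps(c^Tx^* - c^T\xhat)$. For membership $x_\eps \in P$, I would restrict attention to $\eps \in (0,1]$ (which appears to be the intended regime, since for $\eps > 1$ the coefficient $1-\eps$ is negative and can flip an $I_1$-type inequality; the useful content of the lemma is anyway the limit $\eps \to 0^+$, which shows $c^Tx^* = \sup_{x \in P} c^Tx$). For $i \in E$, both $x^*$ and $\xhat$ attain $b_i$, so any affine combination does too. For $i \in I_1$, $x_\eps$ is a convex combination of two points satisfying $a_i^Tx \ge b_i$, hence satisfies it as well. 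For $i \in I_2$, I would write $a_i^Tx_\eps = (1-\eps)a_i^Tx^* + \eps a_i^T\xhat \ge (1-\eps)b_i + \eps a_i^T\xhat > (1-\eps)b_i + \eps b_i = b_i$, where the strict inequality uses $\eps > 0$ and $a_i^T\xhat > b_i$.

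There is no real technical obstacle: the lemma is essentially a bookkeeping exercise that says ``moving a tiny bit from $x^*$ toward an interior point $\xhat$ gets us back into $P$ without losing much objective value, and moving along a recession direction from $\xhat$ stays inside $P$.'' The only subtle point worth flagging in the writeup is the implicit range of $\eps$, and the fact that the direction cone $D$ behaves identically with respect to $I_2$ and $I_1$ (both give $a_i^Td \ge 0$) because $D$ is defined from the closure $Q$.
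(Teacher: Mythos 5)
Your proof is correct and takes essentially the same approach as the paper, which simply asserts that both membership claims ``follow from the definitions of $P$, $Q$, and $D$''; your case analysis over the constraint index sets $E$, $I_1$, $I_2$ is exactly the verification that assertion leaves implicit, and the cost bound is, as you note, an identity. You also correctly flag a genuine (if minor) imprecision in the statement: for $\eps>1$ the coefficient $1-\eps$ is negative and the $I_1$ constraints can fail (e.g.\ $P=\{x\in\mathbb{R}: 0\le x<1\}$, $c=1$, $x^*=1$, $\xhat=1/2$, $\eps=3$ gives $x_\eps=-1/2\notin P$), so the lemma should restrict to $\eps\in(0,1]$; this is harmless for \cref{thm:lin-prog-open-sup}, which only uses $\eps\to 0^+$.
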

\begin{proof}
The claims $\xhat + \alpha d \in P$ and $x_{\eps} \in P$
follow from the definitions of $P$, $Q$, and $D$.
\end{proof}

\begin{lemma}
\label{thm:lin-prog-open-sup}
If $P \neq \emptyset$, then
$\displaystyle \sup_{x \in P} c^Tx = \sup_{x \in Q} c^Tx$.
\end{lemma}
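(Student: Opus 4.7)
My plan is to establish the inequality in both directions, with the nontrivial direction using the solver output $(x^*, d)$ from \Cref{thm:lin-prog-strong} applied to $Q$, together with \Cref{thm:lin-prog-open} to migrate that solution into $P$.

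First, the easy direction: since $P \subseteq Q$, we immediately get $\sup_{x \in P} c^T x \le \sup_{x \in Q} c^T x$. So the work is all in proving $\sup_{x \in P} c^T x \ge \sup_{x \in Q} c^T x$.

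For the nontrivial direction, I would fix an arbitrary $\xhat \in P$ (which exists by assumption), and invoke \Cref{thm:lin-prog-strong} on the LP $\max_{x \in Q} c^T x$ to obtain $(x^*, d)$. Since $P \neq \emptyset$ implies $Q \neq \emptyset$, the infeasible case is ruled out and $x^* \neq \Null$. Now I would split into two cases depending on whether $d = 0$. If $d \neq 0$, \Cref{thm:lin-prog-strong} guarantees $c^T d > 0$, so $\sup_{x \in Q} c^T x = \infty$; on the $P$ side, \Cref{thm:lin-prog-open} gives $\xhat + \alpha d \in P$ for every $\alpha \ge 0$, and $c^T(\xhat + \alpha d) = c^T \xhat + \alpha c^T d \to \infty$, so $\sup_{x \in P} c^T x = \infty$ as well. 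If $d = 0$, then $c^T x^* = \sup_{x \in Q} c^T x$, and for each $\eps \in (0, 1]$, \Cref{thm:lin-prog-open} yields $x_\eps \in P$ with $c^T x_\eps \ge c^T x^* - \eps(c^T x^* - c^T \xhat)$. Letting $\eps \to 0^+$ gives $\sup_{x \in P} c^T x \ge c^T x^* = \sup_{x \in Q} c^T x$.

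There is really no obstacle here; the substance of the argument is already packaged inside \Cref{thm:lin-prog-open}, which supplies both the feasibility-preserving ray $\xhat + \alpha d \in P$ (for the unbounded case) and the convex combination $x_\eps = (1-\eps)x^* + \eps\xhat \in P$ (for the bounded case). The only thing to be mildly careful about is ensuring the two cases of \Cref{thm:lin-prog-strong}'s output neatly align with the two cases above and that the $\eps \to 0^+$ limit actually transfers the optimal value from $Q$ to the supremum over $P$, which is immediate from the linear bound on $c^T x_\eps$.
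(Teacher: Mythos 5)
Your proof is correct and takes essentially the same approach as the paper: both arguments invoke \Cref{thm:lin-prog-open} to split on whether $d=0$, using the ray $\xhat + \alpha d$ when $d \neq 0$ and the convex combination $x_\eps$ with $\eps \to 0^+$ when $d = 0$. Your write-up is slightly more explicit about the trivial $P \subseteq Q$ inequality, but the substance is identical.
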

\begin{proof}
This is a simple corollary of \cref{thm:lin-prog-open}.
If $d \neq 0$, then both $P$ and $Q$ contain a ray of increasing objective value,
so $\sup_{x \in P} c^Tx = \sup_{x \in Q} c^Tx = \infty$.

If $d = 0$, then $\sup_{x \in Q} c^Tx = c^Tx^*$.
Since $P \subseteq Q$, we have $\sup_{x \in P} c^Tx \le \sup_{x \in Q} c^Tx = c^Tx^*$.
For all $\eps > 0$, we have
$c^Tx^* - \eps(c^Tx^* - c^T\xhat) \le c^Tx_{\eps} \le c^Tx^*$.
By picking a small enough $\eps$, we can get $c^Tx_{\eps}$
arbitrarily close to $c^Tx^*$.
\end{proof}

\end{document}